\title{Square Formation by Asynchronous Oblivious Robots}
\author{Marcello Mamino\thanks{Institut f\"ur Algebra, TU Dresden, 01062 Dresden, Germany, {\tt marcello.mamino@tu-dresden.de}}\and Giovanni Viglietta\thanks{University of Ottawa, Canada, {\tt gvigliet@uottawa.ca}}}
\begin{document}
\thispagestyle{empty}
\maketitle

\begin{abstract}
A fundamental problem in Distributed Computing is the Pattern Formation problem, where some independent mobile entities, called robots, have to rearrange themselves in such a way as to form a given figure from every possible (non-degenerate) initial configuration.

In the present paper, we consider robots that operate in the Euclidean plane and are dimensionless, anonymous, oblivious, silent, asynchronous, disoriented, non-chiral, and non-rigid. For this very elementary type of robots, the feasibility of the Pattern Formation problem has been settled, either in the positive or in the negative, for every possible pattern, except for one case: the Square Formation problem by a team of four robots.

Here we solve this last case by giving a Square Formation algorithm and proving its correctness. Our contribution represents the concluding chapter in a long thread of research. Our results imply that in the context of the Pattern Formation problem for mobile robots, features such as synchronicity, chirality, and rigidity are computationally irrelevant.
\end{abstract}

\section{Introduction and Background}\label{s:1}

Consider a finite set of independent computational entities, called \emph{robots}, that live and operate in the Euclidean plane and are capable of observing each other's positions and moving to other locations, through so-called \emph{Look-Compute-Move cycles}. A fundamental motion planning question in Distributed Computing is which \emph{patterns} can be formed by such robots, regardless of their initial positions. This is known as the \emph{Pattern Formation problem}, and has been extensively studied under different robot models (see the monography~\cite{FloPS12}).

In this paper we focus on a very weak type of robots, which are modeled as geometric points (\emph{dimensionless}), are all indistinguishable from each other (\emph{anonymous}), and execute the same deterministric algorithm. Moreover, they retain no memory of past events and observations (\emph{oblivious}), they cannot communicate explicitly (\emph{silent}), they have no common notion of time (\emph{asynchronous}), of a North direction (\emph{disoriented}), of a clockwise direction (\emph{non-chiral}), and they may be unpredictably stopped during each cycle before reaching their intended destination (\emph{non-rigid movements}). This robot model is often called \emph{ASYNCH}.

Note that if $n$ such robots initially form a regular $n$-gon and their local coordinate systems are oriented symmetrically, then they all have the same ``view'' of the world. Now, if they are all activated synchronously, they are bound to make symmetric moves forever, implying that they will always form a regular $n$-gon, or perhaps collide in the center. As the pattern has to be formed from every possible initial configuration, the Pattern Formation problem is unsolvable if the pattern is not a regular polygon or a point. Clearly, the existence of a general algorithm that will indeed make the robots form a regular polygon or a point from any initial configuration is not obvious and has been the object of intensive research by several authors.

In the case of a point, the Pattern Formation problem has been eventually settled in~\cite{CieFPS12}, where an algorithm is presented that always makes $n\neq 2$ ASYNCH robots gather in a point (if $n=2$, the problem is unsolvable).

In the case of a regular polygon, there is a long history of algorithms that solve the Pattern Formation problem under increasingly weaker robot models. We start from the semi-synchronous model, \emph{SSYNCH}, in which we assume the existence of a global ``clock'' that discretizes time. In each time unit, some robots (chosen by an external ``adversary'') perform a complete Look-Compute-Move cycle synchronously, while the other robots remain inactive. In~\cite{SuzY99} it is shown that SSYNCH robots can always form a regular polygon, provided that they have the ability to remember their past observations (hence they are not oblivious). In~\cite{DefS08} the obliviousness of the robots is restored, but the algorithm proposed only makes the robots \emph{converge} to a regular polygon, perhaps without ever forming one. These results were improved in~\cite{DieP07a}, where it is shown how $n\neq 4$ SSYNCH robots, without additional requirements, can form a regular polygon. The case of a square, $n=4$, was solved separately in~\cite{DieP08} with an ad-hoc algorithm.

For ASYNCH robots, a simple solution was given in~\cite{FloPSW08}, under the assumption that the local coordinate systems of all robots have the same orientation. This result was improved in~\cite{FujYKY12}, where it is only assumed that the local coordinate systems are all right-handed (\emph{chirality}), but may be rotated arbitrarily. In~\cite{FloPSV14}, an algorithm is given for $n\neq 4$ ASYNCH robots with no assumptions on their local coordinate systems, but allowing them to move along circular arcs, as well as straight line segments. A solution for $n\neq 4$ ASYNCH robots with no extra assumptions was finally given in~\cite{FloPSV15}. The general algorithm lets only a few robots move at a time, so that the others will provide a stable ``reference frame'' for them. The case $n=4$ is left unsolved in~\cite{FloPSV15}, essentially because four robots are too few to implement this strategy, yet enough to make ad-hoc solutions elusive.

In the following sections, we formalize the \emph{Square Formation} problem for $n=4$ ASYNCH robots, we give an algorithm for it, and we prove its correctness, thus completing the characterization of the patterns that are formable by ASYNCH robots from every initial configuration. Since the proof of non-formability of asymmetric patterns that we outlined above holds even for fully synchronous robots with chirality and rigid movements (i.e., movements that cannot be unpredictably stopped by an adversary), all these features turn out to be computationally irrelevant with respect to the Pattern Formation problem.

\section{Model Specification}\label{s:2}

Let $\mathcal R=\{r_1,r_2,r_3,r_4\}$ be a set of \emph{robots}, each of which is thought of as a computational entity occupying a point in the plane $\mathbb R^2$ and having its own local Cartesian coordinate system. Each robot's coordinate system is always centered at the robot's location, and different robots' coordinate systems may have different orientation, handedness, and unit of length.

Each robot cyclically goes through three phases: \emph{Look}, \emph{Compute}, and \emph{Move}. In a Look phase, the robot takes an instantaneous ``snapshot'' of all the robots' locations and it expresses them as points in $\mathbb R^2$ within its own local coordinate system. In the next Compute phase, these four points are fed, in any order, to an \emph{algorithm} $\mathcal A$, which outputs a \emph{destination} point $p$, again expressed in the robot's coordinate system. The algorithm $\mathcal A$ is the same for all robots, and it can only compute algebraic functions of the input points (for our purposes, we will only need arithmetic functions and square roots). In the next Move phase, the robot moves toward $p$ along a straight line. Note that, even though the robots are indistinguishable, each robot can identify itself in the snapshots it takes, because it is always located at $(0,0)$.

When a Move phase ends, the Look phase of the next cycle starts. We may assume that the Look and Compute phases of a robot are executed together and instantaneously at each cycle, but the Move phase's duration may vary, although it must be finite. The duration of each Move phase of each cycle of each robot is decided arbitrarily by an external ``adversary'', called \emph{scheduler}. As a consequence, a robot may perform a Look while some other robots are in the middle of a movement, and there is no way to tell it from the snapshot. The scheduler also arbitrarily sets the speed of each robot at each moment of each Move phase; the velocity vector must always be directed toward the current destination point, or be the null vector. In particular, a robot may actually start moving a long time after the last Look-Compute phase, when the snapshot it has taken and the destination it has computed are already ``obsolete''. 

The scheduler can also decide to end a robot's Move phase before it reaches its intended destination. The only constraint is that it cannot do so before the robot has moved by at least $\delta$ during that phase, where $\delta$ is a fixed positive distance (in absolute units), not known to the robots. This is to guarantee that if a robot keeps computing the same destination point (in absolute coordinates), it reaches it in finitely many cycles.

An initial configuration of the robots is \emph{non-degenerate} if no two robots are located in the same point, and no robot is moving (formally, each robot's initial destination point coincides with the robot's initial location, and all robots are in a Move phase initially).

The Square Formation problem asks for a specific algorithm $\mathcal A$, whose input is a quadruplet of points and the output is a single destination point, such that, if the four robots of $\mathcal R$ execute $\mathcal A$ in all their Compute phases, starting from any non-degenerate initial configuration, and regardless of the scheduler's choices and of the value of $\delta$, they always end up forming a square in finite time. Once a square is formed, the robots have to maintain their positions forever. (Recall that the input quadruplet always contains $(0,0)$ as the executing robot's location, and the value of $\delta$ cannot be accessed by $\mathcal A$.)

\section{Preliminary Constructions and Definitions}\label{s:3}

\begin{figure}[!b]
\centering
\includegraphics[scale=1]{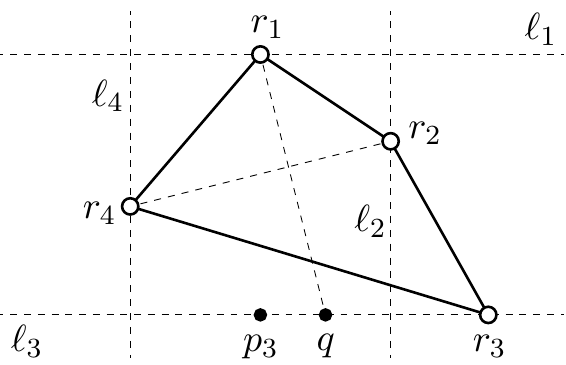}
\caption{Constructing the guidelines and the targets}
\end{figure}

The following geometric construction will be useful in our Square Formation algorithm. Let $r_1r_2r_3r_4$ be a strictly convex quadrilateral whose diagonals $r_1r_3$ and $r_2r_4$ are not orthogonal. Let $q$ be the unique point such that $r_1q=r_2r_4$, the lines $r_1q$ and $r_2r_4$ are orthogonal, and the ray emanating from $r_1$ and passing through $q$ intersects the line $r_2r_4$. Let $\ell_3$ be the line through $r_3$ and $q$, and let $\ell_1$ be the line through $r_1$ parallel to $\ell_3$. (Since $r_1r_3$ and $r_2r_4$ are not orthogonal, $\ell_3$ is well defined and is distinct from $\ell_1$.) Let $\ell_2$ be the line through $r_2$ orthogonal to $\ell_1$, and let $\ell_4$ be the line through $r_4$ parallel to $\ell_2$. By construction, these four lines intersect at four points that are vertices of a square $Q$. In turn, the midpoints of the edges of $Q$ form a second square $Q'$, called the \emph{target square}.

Ideally, if the robots of $\mathcal R$ are located at $r_1$, $r_2$, $r_3$, $r_4$ (with abuse of notation, we identify each robot with its location), our Square Formation algorithm will attempt to make the team move to the vertices of $Q'$. More precisely, the \emph{target} of $r_i$ is the vertex $p_i$ of $Q'$ that lies on $\ell_i$. The line $\ell_i$ is called the \emph{guideline} of $r_i$, and the segment $r_ip_i$ is the \emph{pathway} of $r_i$. If $r_i$ lies in the interior of an edge of $Q$, then $r_i$ is said to be \emph{internal}; otherwise, $r_i$ is \emph{external}. If two robots have parallel guidelines, they are said to be \emph{opposite} to each other.

In Section~\ref{s:5} we will prove that if we permute the labels of the vertices of the above quadrilateral arbitrarily, as long as the indices follow a clockwise or a counterclockwise order, and we repeat the same construction, the resulting target square will be the same. Hence, if several robots compute the above construction at the same time within their local coordinate systems, they necessarily obtain the same target square. Note also that the target square remains unaltered as the robots move along their pathways, as long as the quadrilateral stays convex, and its diagonals stay non-orthogonal.

Let $c$ be the center of the target square. The ``signed distance'' between $r_i$ and its target can be computed as $(r_i-c)\times(r_i'-c)$, where $\times$ denotes the cross product in $\mathbb R^2$, defined as $(x_1,y_1)\times (x_2,y_2)=x_1y_2-x_2y_1$. If this number is $0$, then $r_i$ is said to be \emph{finished}. If the product of the signed distances of two robots is not negative (respectively, not positive), the two robots are said to be \emph{concordant} (respectively, \emph{discordant}). Intuitively, they are concordant if they move around $c$ in the same ``direction'' (i.e., clockwise or counterclockwise) as they go toward their targets.

Let the guidelines of two discordant robots $r_i$ and $r_j$ intersect in a point $g$. If $p_i$ lies on the segment $r_ig$ and $p_j$ lies on the segment $r_jg$, then $r_i$ and $r_j$ are said to be \emph{convergent}; otherwise, they are \emph{divergent} (if both $r_i$ and $r_j$ are finished, they are both convergent and divergent).

If the pathway of $r_i$ intersects the segment $r_jr_k$ in $v$, then $r_i$ is said to be \emph{blocked} at $v$. If the pathway of $r_i$ intersects an extension of the segment $r_jr_k$ in $v$, then $r_i$ is said to be \emph{hindered} at $v$ (see Figure~\ref{f:2hinder}).

Let us give one last definition. A \emph{thin hexagon} is a hexagon $H=h_1h_2h_3h_4h_5h_6$ such that $h_1h_2=h_3h_4=h_4h_5=h_6h_1=h_1h_4/4$, the angles at $h_1$ and $h_4$ are $50^\circ$, and all other angles are equal. $h_1$ and $h_4$ are the \emph{extremes} of $H$, and the segment $h_1h_4$ is the \emph{main diagonal}. The other four vertices are called \emph{beacons}, and the midpoint of two adjacent beacons is a \emph{haven}.

\begin{figure}[!ht]
\centering
\includegraphics[scale=1]{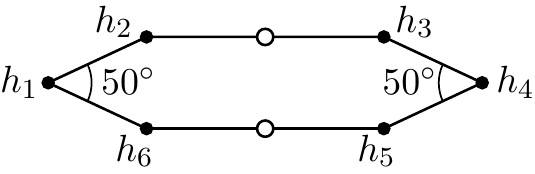}
\caption{Thin hexagon (empty dots denote havens)}
\end{figure}

\section{The Square Formation Algorithm}\label{s:4}

The Square Formation algorithm will identify the current configuration's \emph{class} among the ones listed below, and it will execute a different procedure based on the class. If the configuration belongs to several classes, the relevant one is the one that appears first in the list.

So, when a new class is defined, it is assumed that none of the definitions of the previous classes are satisfied. In particular, after quadrilaterals with orthogonal diagonals and non-convex quadrilaterals have been ruled out, the target square described in Section~\ref{s:3} will be well defined, as well as each robot's guideline, etc.

Again, we identify each robot $r_i$ with its position, and a class' definition is fulfilled when there is a permutation of the indices that satisfies the corresponding condition.

We will use expressions of the form, ``robot $r_i$ moves to point $d_i$; robot $r_j$ moves to point $d_j$'' without specifying which robot is actually running the algorithm, as if there was a global coordinator overseeing the execution. However, since the robots are anonymous and independent, we will always move symmetric robots in symmetric ways, so as to comply with the specification of the robot model given in Section~\ref{s:2}.

Borrowing from~\cite{FloPSV14,FloPSV15}, we will make use of \emph{cautious moves}, whose purpose is, roughly speaking, to prevent situations in which a robot is still in the middle of a movement when a configuration class change occurs. This is done by identifying a finite number of \emph{critical points} and making each moving robot stop at the first critical point it encounters. In this section, we will only generically say, ``robot $r_i$ cautiously moves toward $d_i$'', leaving out the tricky details of how critical points are chosen and maintained. A complete analysis of every case will be carried out in Section~\ref{s:5}, along with the proof of correctness of the Square Formation algorithm.\\
\\
\textbf{Configuration 1: Orthogonal}\\
\textbf{Definition.} The segments $r_1r_3$ and $r_2r_4$ are orthogonal and intersect in a point $c$ (possibly an endpoint).\\
\textbf{Execution.} Each of $r_1$, $r_2$, $r_3$, $r_4$ moves away from $c$, to a point at distance $\max\{r_1c,r_2c,r_3c,r_4c\}$ from it.\\
\\
\textbf{Configuration 2: Thin Hexagon}\\
\textbf{Definition.} The thin hexagon $H$ with main diagonal $r_1r_2$ contains also $r_3$ and $r_4$, but not on two adjacent beacons.\\
\textbf{Execution.} If both $r_3$ and $r_4$ are on the main diagonal, they move orthogonally to it, remaining within $H$. If $r_3$ is on the main diagonal and $r_4$ is not, $r_3$ moves to the opposite side of the main diagonal, orthogonally to it, remaining within $H$. If $r_3$ and $r_4$ are on different sides of the main diagonal, they move to the haven on their respective side; the closest moves first while the other one waits (in case of a tie, they both move). If $r_3$ and $r_4$ are on the same side of the main diagonal, they move to the two adjacent beacons on that side, minimizing the total distance traveled.\\
\\
\textbf{Configuration 3: Non-Convex}\\
\textbf{Definition.} The triangle $r_1r_2r_3$ contains $r_4$.\\
\textbf{Execution.} $r_4$ moves to the foot of an altitude of $r_1r_2r_3$ that lies in the interior of an edge of $r_1r_2r_3$.\\
\\
\textbf{Configuration 4: Pinwheel}\\
\textbf{Definition.} $r_1$, $r_2$, $r_3$, $r_4$ are all concordant.\\
\textbf{Execution.} Let $r_1$ be opposite to $r_3$ and assume without loss of generality that $r_1r_3<r_2r_4$ (if $r_1r_3=r_2r_4$, the configuration is Orthogonal). Suppose that $r_1$ and $r_3$ are both finished. If the thin hexagon $H$ with main diagonal $r_2r_4$ contains $r_1$ (and not $r_3$), assume without loss of generality that $r_2r_1<r_2r_3$, and let $r_2$ cautiously move toward its target. If neither $r_1$ nor $r_3$ is in $H$, both $r_2$ and $r_4$ cautiously move toward their targets. Suppose now that $r_1$ and $r_3$ are not both finished. if $r_1$ is blocked at $v$, it cautiously moves toward $v$. If neither $r_1$ nor $r_3$ is blocked and $r_1$ is hindered at $v$, $r_1$ cautiously moves toward $v$. If neither $r_1$ nor $r_3$ is blocked or hindered, $r_1$ and $r_3$ cautiously move toward their targets. (These rules are exhaustive due to Lemma~\ref{l:pinwheel1} below.)\\
\\
\textbf{Configuration 5: Scissors}\\
\textbf{Definition.} $r_1$, $r_2$ are divergent; $r_3$, $r_4$ are divergent.\\
\textbf{Execution.} If exactly one robot is external, it cautiously moves toward its target. If all the internal robots are finished, all the external robots cautiously move toward their targets. In all other cases, all the internal robots cautiously move toward their targets.\\
\\
\textbf{Configuration 6: Flowing}\\
\textbf{Definition.} $r_1$, $r_2$ are divergent; $r_3$, $r_4$ are convergent.\\
\textbf{Execution.} If two opposite robots are finished, the non-finished one that is closest to its target cautiously moves toward it (there cannot be a tie, or the configuration would be Orthogonal). Otherwise, if exactly one of $r_1$ and $r_2$ is finished, the robot opposite to it cautiously moves toward its target. Otherwise, both $r_3$ and $r_4$ cautiously move toward their targets.\\
\\
\textbf{Configuration 7: One Discordant}\\
\textbf{Definition.} $r_1$, $r_2$, $r_3$ are concordant; $r_4$ is discordant.\\
\textbf{Execution.} $r_4$ cautiously moves toward its target.

\section{Correctness of the Algorithm}\label{s:5}

As noted in Section~\ref{s:3}, the target square is well defined, no matter how each robot computes it.

\begin{lemma}\label{l:target1}
Given a strictly convex quadrilateral with non-orthogonal diagonals, regardless of how labels $r_1$, $r_2$, $r_3$, $r_4$ are assigned to its vertices following a clockwise or a counterclockwise order, the construction in Section~\ref{s:3} yields the same guidelines and target square.
\end{lemma}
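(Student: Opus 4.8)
My plan is to reduce the statement to invariance of the construction under two basic relabelings and then verify those. A labeling of the quadrilateral ``in clockwise or counterclockwise order'' is one of $8$ possibilities, and once one of them is fixed as a reference, every other is obtained from it by a permutation of the indices lying in the dihedral group $D_4\le S_4$ of symmetries of the $4$-cycle $(1\,2\,3\,4)$. This $D_4$ is generated by the relabeling $\rho$ that makes $r_2,r_3,r_4,r_1$ the new first, second, third, fourth vertex, together with the relabeling $\mu$ that interchanges the names of $r_2$ and $r_4$; both are again clockwise- or counterclockwise-ordered labelings. What the construction attaches to the configuration is a guideline for each vertex, together with the square $Q$ (whence $Q'$ as its midpoint square), and for such data invariance under a composition of relabelings follows from invariance under each factor; so it suffices to show that applying $\rho$, or applying $\mu$, changes no vertex's guideline. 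Since $Q$ and $Q'$ are determined by the four guidelines, only the guidelines need to be tracked.

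Invariance under $\mu$ is immediate: the auxiliary point $q$ is determined by $r_1$, by the line through $r_2$ and $r_4$, and by the distance $r_2r_4$, all of which are symmetric in $r_2$ and $r_4$; hence $q$, the line $\ell_3$ through $r_3$ and $q$, and the line $\ell_1$ through $r_1$ parallel to $\ell_3$ are untouched, while $\ell_2$ and $\ell_4$ --- the perpendiculars to $\ell_1$ through $r_2$ and through $r_4$ --- merely exchange names. Every vertex thus keeps its guideline, and $Q$, $Q'$ are literally unchanged.

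Invariance under $\rho$ is the substantive case; the subtlety is that $\rho$ swaps the roles of the two diagonals, so the auxiliary point is built from $r_1$ using the diagonal $r_2r_4$ in the reference construction, but from $r_2$ using the diagonal $r_3r_1$ in the $\rho$-construction. I would fix coordinates in which the diagonals meet at the origin, $r_2r_4$ lies along the $x$-axis with $r_2=(-b,0)$ and $r_4=(d,0)$, and $r_1=-a(\cos\phi,\sin\phi)$, $r_3=c(\cos\phi,\sin\phi)$ with $a,b,c,d>0$ and $\phi\in(0,\pi)$, $\cos\phi\neq 0$: here $a,b,c,d>0$ because strict convexity makes the diagonals cross at an interior point of each, $\cos\phi\neq 0$ because the diagonals are not orthogonal, and the sign normalizations are achieved by a rotation, a possible reflection of the plane, and a choice of direction along $r_1r_3$, none of which affects squares or perpendicularity. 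Strict convexity also places $r_1,r_3$ on opposite sides of the line $r_2r_4$ and $r_2,r_4$ on opposite sides of the line $r_1r_3$, which resolves the ``ray toward the line'' clause in the definition of the auxiliary point. Writing $m=r_1r_3=a+c$ and $n=r_2r_4=b+d$, one gets the reference auxiliary point $q=(-a\cos\phi,\,n-a\sin\phi)$, so $\ell_1$ and $\ell_3$ have direction $q-r_3=(-m\cos\phi,\,n-m\sin\phi)$; likewise the $\rho$-auxiliary point is $q'=(-b+m\sin\phi,\,-m\cos\phi)$, so the two parallel guidelines of the $\rho$-construction have direction $q'-r_4=(m\sin\phi-n,\,-m\cos\phi)$. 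These two vectors are orthogonal, since $(-m\cos\phi)(m\sin\phi-n)+(n-m\sin\phi)(-m\cos\phi)=0$. Granting this, the parallelism/orthogonality relations defining the four lines force everything else: the first guideline of the $\rho$-construction passes through $r_2$ perpendicularly to the reference $\ell_1$, hence equals the reference $\ell_2$; the third passes through $r_4$ parallel to it, hence equals the reference $\ell_4$; the second passes through $r_3$ perpendicularly to the first, hence parallel to the reference $\ell_1$, hence equals the reference $\ell_3$; and the fourth passes through $r_1$ parallel to the second, hence equals the reference $\ell_1$. Since the $\rho$-construction's first, second, third, fourth vertices are $r_2,r_3,r_4,r_1$, this says that $r_2,r_3,r_4,r_1$ keep exactly the guidelines $\ell_2,\ell_3,\ell_4,\ell_1$ that they had before; so $\rho$ changes no vertex's guideline either.

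Combining the two cases gives invariance of the guidelines --- hence of $Q$ and $Q'$ --- under all of $D_4$, which is the lemma. I expect the only genuine computation to be the coordinate check in the $\rho$ case, and the single delicate point there is getting the convexity-forced orientations right, precisely because $\rho$ constructs its auxiliary point with respect to the other diagonal, so that one really must verify by hand that the two resulting guideline directions come out perpendicular.
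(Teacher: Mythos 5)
Your proof is correct and follows essentially the same route as the paper's: reduce to the two relabelings generating the dihedral group --- the swap of $r_2,r_4$ (trivial) and the cyclic shift --- and, for the shift, show that the auxiliary direction built on the other diagonal is perpendicular to the original $\ell_3$, after which the parallel/orthogonal defining relations force every vertex to keep its guideline, hence the same $Q$ and $Q'$. The only difference is that you establish the key perpendicularity by an explicit coordinate computation, whereas the paper observes synthetically that the triangle $r_2r_4q'$ is a copy of $r_1r_3q$ rotated by $90^\circ$; both verifications are sound.
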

\begin{proof}
The construction does not change if we invert the labels of $r_2$ and $r_4$, hence we may assume that the indices are arranged in clockwise order. Now it is enough to prove that the construction does not change if we shift the labels clockwise by one position. So, let $q'$ be the unique point such that $r_2q'=r_1r_3$, the lines $r_2q'$ and $r_1r_3$ are orthogonal, and the ray emanating from $r_2$ and passing through $q'$ intersects the line $r_1r_3$. By construction, the triangle $r_2r_4q'$ is a copy of $r_1r_3q$ rotated by $90^\circ$, which means that the line $r_4q'$ is orthogonal to $\ell_3$, and hence coincident with $\ell_4$. It follows that all the new guidelines are the same as in the original construction, and so is the target square.
\end{proof}

\begin{lemma}\label{l:target2}
Given a strictly convex quadrilateral with non-orthogonal diagonals, if its vertices are labeled $r_1$, $r_2$, $r_3$, $r_4$ in clockwise order, then their targets $p_1$, $p_2$, $p_3$, $p_4$ also appear in clockwise order, and vice versa.
\end{lemma}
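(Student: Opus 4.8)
The plan is to reduce the lemma to the verification of a single explicit example, combining a continuity argument with the symmetry provided by Lemma~\ref{l:target1}. To begin with, observe that $p_i$ is the midpoint of the side of $Q$ carried by the guideline $\ell_i$, and that $p_i$ and $p_{i+1}$ (indices mod $4$) are midpoints of two \emph{adjacent} sides of $Q$: opposite guidelines are the pairs $\ell_1,\ell_3$ and $\ell_2,\ell_4$, and these pairs are perpendicular, so no two consecutive guidelines are parallel. Consequently $p_1p_2p_3p_4$ traces the boundary of the nondegenerate square $Q'$; in particular it is a strictly convex quadrilateral, so it runs either clockwise or counterclockwise, and its orientation equals that of the cyclic order in which the sides of $Q$ meet $\ell_1,\ell_2,\ell_3,\ell_4$. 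The ``vice versa'' part then follows from the other implication by applying a reflection: reflections are orientation-reversing similarities and preserve every ingredient of the construction of Section~\ref{s:3} (distances, orthogonality, parallelism, the ``ray toward the line'' condition, midpoints), so a reflection carries the guidelines and target square of $r_1r_2r_3r_4$ to those of the reflected quadrilateral and carries each $p_i$ to the corresponding target; since a labelled strictly convex quadrilateral runs clockwise or counterclockwise, it is enough to prove the implication ``$r_1,r_2,r_3,r_4$ clockwise $\Rightarrow$ $p_1,p_2,p_3,p_4$ clockwise''.

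Let $X$ be the space of clockwise strictly convex quadrilaterals with non-orthogonal diagonals. The orientation of $(p_1,\dots,p_4)$ defines a function $X\to\{+1,-1\}$, and this function is locally constant: on all of $X$ the construction is non-degenerate (non-orthogonality of the diagonals forces $\ell_1\neq\ell_3$, so $Q$ is a genuine square and $Q'$ a nondegenerate one with distinct vertices), hence $Q$, $Q'$ and the points $p_i$ depend continuously on $r_1,\dots,r_4$. Moreover $X$ has exactly two connected components. Indeed, a clockwise strictly convex quadrilateral can be written as $r_1=P+\lambda_1u$, $r_2=P+\lambda_2v$, $r_3=P-\lambda_3u$, $r_4=P-\lambda_4v$, where $P$ is the intersection point of the diagonals, $u,v$ are unit vectors, $\lambda_1,\dots,\lambda_4>0$, and $v$ is obtained from $u$ by a clockwise rotation of some angle $\theta\in(0,\pi)$; the diagonals are non-orthogonal exactly when $\theta\neq\pi/2$, so $X$ is cut by the clopen condition $\theta<\pi/2$ versus $\theta>\pi/2$ into two pieces, each connected because all the remaining parameters range over connected sets.

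The crucial observation is that the label shift $\sigma\colon(r_1,r_2,r_3,r_4)\mapsto(r_2,r_3,r_4,r_1)$ exchanges the two components of $X$: it obviously keeps the quadrilateral clockwise, strictly convex, and with non-orthogonal diagonals, while a short check in the parametrization above shows that it replaces $\theta$ by $\pi-\theta$. On the other hand, by Lemma~\ref{l:target1}, $\sigma$ leaves the target square and the (unlabelled) family of guidelines unchanged, so it merely shifts the labels $p_1,p_2,p_3,p_4$ cyclically and therefore preserves their orientation. Combined with the local constancy established above, this shows that the orientation of $(p_1,\dots,p_4)$ takes one and the same value throughout $X$, so it suffices to compute it for a single convenient quadrilateral in one of the two components — for instance $r_1=(1,0)$, $r_2=(\cos\theta,-\sin\theta)$, $r_3=(-1,0)$, $r_4=(-\cos\theta,\sin\theta)$ with a small $\theta>0$ — by explicitly building $q$, the four guidelines, the square $Q$, and the midpoint square $Q'$, and checking that $p_1,p_2,p_3,p_4$ come out clockwise. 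The only genuine calculation is this last verification; the step I expect to demand the most care is confirming that $\sigma$ inverts $\theta$ about $\pi/2$, since that is precisely what permits transporting the single example to the whole of $X$. (Alternatively one could bypass the topology and argue purely in coordinates, normalizing so that $r_1=(0,0)$ and $q=(0,1)$ and computing the relevant cross product symbolically; the obstacle there is keeping track of the signs forced by strict convexity and the clockwise hypothesis.)
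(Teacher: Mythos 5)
Your argument is correct in substance but follows a genuinely different route from the paper's. The paper disposes of the lemma with a short synthetic argument: it reduces to showing that if the targets $p_1,p_2,p_3$ are clockwise then so are $r_1,r_2,r_3$, and, assuming $r_1,r_2,r_3$ counterclockwise, derives a contradiction from the position of the auxiliary point $q$, which would have to lie to the left of $\ell_2$ while also lying on $\ell_4$, which is to the right of $\ell_2$ (a two-line argument leaning on Figure~\ref{f:order}). You instead run a deformation-plus-symmetry argument: the orientation of $(p_1,\dots,p_4)$ is locally constant on the space $X$ of clockwise strictly convex quadrilaterals with non-orthogonal diagonals; $X$ splits into two connected pieces according to whether the clockwise angle $\theta$ between the half-diagonals satisfies $\theta<\pi/2$ or $\theta>\pi/2$; the cyclic relabeling $\sigma$ swaps the pieces (it sends $\theta$ to $\pi-\theta$, which is correct) while, by Lemma~\ref{l:target1}, only cyclically shifting the target labels and hence preserving their orientation; so a single explicit instance decides the sign everywhere, and the ``vice versa'' follows from equivariance of the construction under reflections, since the targets of a given configuration are unambiguously clockwise or counterclockwise. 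Two small points are owed rather than wrong: (i) the anchor computation is only promised, not performed --- it does come out right (with your normalization the guideline directions are proportional to $(1-\sin\theta,-\cos\theta)$ and $(\cos\theta,1-\sin\theta)$, and in the corresponding orthonormal frame the targets are the points $(0,+),(+,0),(0,-),(-,0)$, i.e.\ clockwise), so this is a routine but necessary supplement; (ii) when you say that $\sigma$ ``merely shifts the labels $p_1,\dots,p_4$ cyclically'' you are using the slightly stronger fact, visible in the proof of Lemma~\ref{l:target1} rather than in its bare statement, that each robot's own guideline is unchanged under the relabeling (not merely the set of guidelines and the target square); one sentence tracing $\ell'_1=\ell_2$, $\ell'_2=\ell_3$, etc., closes this. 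As for what each approach buys: the paper's proof is far shorter but terse and figure-dependent, whereas yours is longer but mechanical and robust --- it concentrates all the geometry into one symbolic verification and reuses the symmetry already established in Lemma~\ref{l:target1}.
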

\begin{proof}
It suffices to prove that if $p_1$, $p_2$, $p_3$ are in clockwise order, then so are $r_1$, $r_2$, $r_3$. Referring to Figure~\ref{f:order}, if $r_1$, $r_2$, $r_3$ are in counterclockwise order, then $r_2$ is located to the right of the line $r_2r_3$. Therefore, $q$ must be to the left of $\ell_2$, contradicting the fact that $q$ must lie on $\ell_4$, which in turn is located to the right of $\ell_2$.
\begin{figure}[!ht]
\centering
\includegraphics[scale=1]{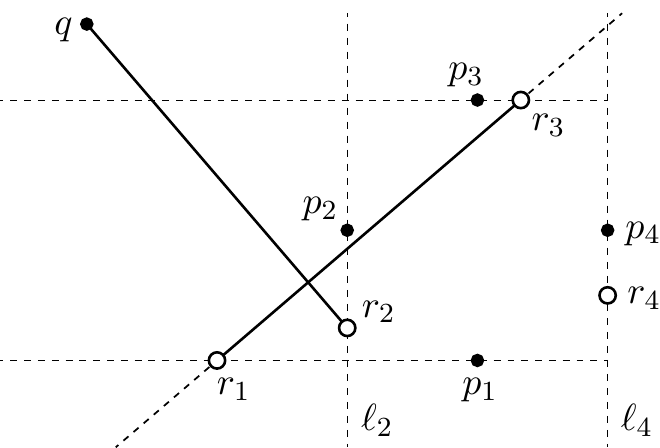}
\caption{$r_1$, $r_2$, $r_3$ cannot be in counterclockwise order}
\label{f:order}
\end{figure}
\end{proof}

The following is an easy consequence of Lemma~\ref{l:target2}.

\begin{cor}\label{c:external}
No two robots have intersecting pathways. If two robots are external, they are either concordant or convergent. If three robots are external, they are concordant.\hfill$\square$
\end{cor}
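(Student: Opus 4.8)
The plan is to pick coordinates in which the configuration is encoded by four real parameters, to rewrite the strict convexity of $r_1r_2r_3r_4$ as an explicit system of inequalities in those parameters, and to reduce all three assertions to two ``forbidden pattern'' lemmas proved from that system.

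First I would normalize. By Lemma~\ref{l:target1} we may relabel so that $r_1,r_2,r_3,r_4$ — and hence, by Lemma~\ref{l:target2}, also $p_1,p_2,p_3,p_4$ — occur in clockwise order, and after an orientation-preserving similarity of the plane we may assume $Q'$ has vertices $p_1=(0,2)$, $p_2=(2,0)$, $p_3=(0,-2)$, $p_4=(-2,0)$, so that $Q=[-2,2]^2$ and $\ell_1,\ell_2,\ell_3,\ell_4$ are the lines $y=2$, $x=2$, $y=-2$, $x=-2$. Parametrizing each robot along its guideline I would set $r_1=(t_1,2)$, $r_2=(2,-t_2)$, $r_3=(-t_3,-2)$, $r_4=(-2,t_4)$, with signs chosen so that: the signed distance of $r_i$ is $2t_i$ (so $r_i,r_j$ are concordant iff $t_it_j\ge 0$, and $r_i$ is finished iff $t_i=0$); $r_i$ is internal iff $|t_i|<2$; and the symmetries of $Q$ act suitably on the parameter vector — a $90^\circ$ rotation permutes $(t_1,t_2,t_3,t_4)$ cyclically (after the induced relabeling), a $180^\circ$ rotation swaps $t_1\leftrightarrow t_3$ and $t_2\leftrightarrow t_4$, and the reflections combine sign changes with index permutations — so it will suffice to treat one representative index and sign in each step. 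Putting $u_i=2-t_i$ and $v_i=2+t_i$ (so $u_iv_i=4-t_i^2$ is positive iff $r_i$ is internal), a routine cross-product computation at each vertex shows that $r_1r_2r_3r_4$ being strictly convex in clockwise order is equivalent to the four inequalities
\[
  v_iv_{i+1}+u_iu_{i-1}>0,\qquad i=1,2,3,4\ (\mathrm{mod}\ 4).
\]

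The core of the argument is two lemmas. \textbf{(F1)}: one cannot have $t_i\ge 2$ and $t_{i+1}\le -2$. \textbf{(F2)}: one cannot have $t_i\ge 2$ and $t_{i+2}\le -2$ (hence, by the $180^\circ$ symmetry, two external opposite robots are always concordant). I expect \textbf{(F1)} to be the main obstacle. To prove it I would suppose $t_1\ge 2$ and $t_2\le -2$; the inequality at $r_2$ then forces $v_3<0$, i.e.\ $t_3<-2$, and the inequality at $r_1$ forces $u_4<0$, i.e.\ $t_4>2$, so in fact all four robots are external, with $t_1>2$, $t_2<-2$, $t_3<-2$, $t_4>2$. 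Writing $a=t_1-2$, $b=-t_2-2$, $c=-t_3-2$, $d=t_4-2$ (all positive) turns the four convexity inequalities into polynomial inequalities: the first two read $a(d-b)>4b$ and $b(c-a)>4a$, which force $d>b$ and $c>a$ and, after multiplying, $(d-b)(c-a)>16$; the third reads $c(d-b)<4b+16$, which combined with $a(d-b)>4b$ gives $(c-a)(d-b)<16$ — a contradiction. Lemma \textbf{(F2)} is then quick: assuming $t_1\ge 2$ and $t_3\le -2$, the inequality at $r_2$ is violated at once if $r_2$ is internal, while if $r_2$ is external then, whatever the sign of $t_2$, one falls into a case already ruled out by \textbf{(F1)}.

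With \textbf{(F1)} and \textbf{(F2)} established, the Corollary follows by bookkeeping. \emph{No two pathways meet}: the pathways of opposite robots lie on parallel distinct lines, while those of adjacent robots $r_i,r_{i+1}$ lie on perpendicular lines meeting at a corner of $Q$, so they can meet only at that corner, which lies on both pathways exactly when $t_i\ge 2$ and $t_{i+1}\le -2$ — impossible by \textbf{(F1)}. \emph{Two external robots are concordant or convergent}: if opposite, they are concordant by \textbf{(F2)}; if adjacent and discordant, \textbf{(F1)} forces $t_i\le -2$ and $t_{i+1}\ge 2$, and then for the corner $g=\ell_i\cap\ell_{i+1}$ one has $p_i\in r_ig$ and $p_{i+1}\in r_{i+1}g$, i.e.\ they are convergent. \emph{Three external robots are concordant}: among any three of the four robots there is an opposite pair, concordant by \textbf{(F2)}; the remaining robot is adjacent to a member of that pair, and its being discordant with that member would contradict \textbf{(F1)} applied at the appropriate index (given the concordance already found) — so the three robots are pairwise concordant; and if all four are external, applying this to two overlapping triples shows all four have signed distances of one sign.
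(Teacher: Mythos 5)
Your proof is correct, but it is a genuinely different (and much more explicit) route than the paper's, which offers no argument at all: the corollary is simply declared an easy consequence of Lemma~\ref{l:target2}, the intended reading being that once robots and targets share the same cyclic order around the square $Q$, the forbidden sign patterns are geometrically evident. You instead use Lemma~\ref{l:target1} and Lemma~\ref{l:target2} only to normalize ($Q=[-2,2]^2$, robot $r_i$ at parameter $t_i$ on its guideline, signed distance $2t_i$, internal iff $|t_i|<2$), translate strict convexity in clockwise order into the four inequalities $v_iv_{i+1}+u_iu_{i-1}>0$, and reduce all three claims to the forbidden patterns (F1) $t_i\ge 2,\ t_{i+1}\le -2$ and (F2) $t_i\ge 2,\ t_{i+2}\le -2$; I checked the computations (the cross-product form of convexity, the forcing of $t_1>2$, $t_2<-2$, $t_3<-2$, $t_4>2$, the inequalities $a(d-b)>4b$, $b(c-a)>4a$, $c(d-b)<4b+16$, and the resulting contradiction $(c-a)(d-b)>16$ versus $<16$), as well as the bookkeeping (adjacent pathways can only meet at the shared corner of $Q$, which both contain exactly in pattern (F1); the surviving discordant pattern $t_i\le-2,\ t_{i+1}\ge 2$ is precisely convergence; any external triple contains an opposite pair, handled by (F2) plus (F1)), and all of it is sound. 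Two minor remarks: you only need the implication ``strictly convex clockwise $\Rightarrow$ the four inequalities,'' not the claimed equivalence; and your analysis shows that (F1) genuinely uses convexity at three vertices, not merely the order statement of Lemma~\ref{l:target2} — so what your approach buys is an airtight, essentially mechanical verification of a claim the paper leaves to the reader's geometric intuition, at the cost of length and of obscuring the picture that makes the paper call it ``easy.''
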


\begin{lemma}\label{l:pinwheel1}
In a Pinwheel configuration, two opposite robots cannot be both blocked. Moreover, if $r_1r_3<r_2r_4$, robots $r_1$ and $r_3$ cannot be both hindered.
\end{lemma}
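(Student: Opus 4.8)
The plan is to push everything into coordinates. By Lemma~\ref{l:target1} the guidelines and the target square do not depend on which clockwise-or-counterclockwise labeling of the quadrilateral is used, so I would fix an orthonormal frame in which the square $Q$ has vertices $(\pm1,\pm1)$, the center of the target square is the origin, the four guidelines are the lines $x=\pm1$ and $y=\pm1$, and the four targets are the edge midpoints $(\pm1,0)$, $(0,\pm1)$. Relabelling cyclically if necessary, I may take $r_1,r_2,r_3,r_4$ on $y=1$, $x=1$, $y=-1$, $x=-1$ with the quadrilateral traversed clockwise; a reflection across the $y$-axis (which fixes the guidelines of $r_1,r_3$, transposes those of $r_2,r_4$, reverses the sign of every signed distance, and leaves $r_1r_3$ and $r_2r_4$ unchanged) then also lets me assume all four signed distances are nonnegative. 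Writing $r_1=(a,1)$, $r_2=(1,-b)$, $r_3=(-c,-1)$, $r_4=(-1,d)$, the Pinwheel hypothesis is exactly $a,b,c,d\ge 0$; the pathway of $r_1$ is the segment $\{(x,1):0\le x\le a\}$, and cyclically for the others; and $r_1r_3<r_2r_4$ is equivalent to $a+c<b+d$, in which case $b+d<2$ precisely when $b,d<1$.

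The second step is to turn ``blocked'' and ``hindered'' into inequalities among $a,b,c,d$. For the pathway of $r_1$, which lives on $y=1$ at $x\ge 0$, I would check each of the three segments joining two of the other robots and its extensions. One finds: the segment $r_2r_3$ never reaches $y=1$; the segment $r_3r_4$ reaches $y=1$ only at $x<0$; the segment $r_2r_4$ reaches $y=1$ inside itself only if $d\ge 1$, at $x=1-\tfrac{2(1+b)}{b+d}$, which lies in $[0,a]$ only if $d\ge b+2$; the extension of $r_2r_4$ beyond $r_4$ sits at $x<-1$; and the extension of $r_3r_4$ beyond $r_4$ meets $y=1$ at $x=\tfrac{c(1-d)-2}{d+1}$, which falls in $[0,a]$ only if $d<1$ and $c(1-d)\ge 2$. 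The remaining possibility, the extension of $r_2r_3$ beyond $r_2$, is the one genuinely dangerous case, handled below. Granting that away, $r_1$ blocked $\implies d\ge b+2$, and $r_1$ hindered $\implies d<1$ and $c(1-d)\ge 2$, hence $c\ge 2$. The point reflection through the origin swaps $r_1\leftrightarrow r_3$ and $r_2\leftrightarrow r_4$ and preserves the whole normalized picture, so symmetrically: $r_3$ blocked $\implies b\ge d+2$; $r_2$ blocked $\implies a\ge c+2$; $r_4$ blocked $\implies c\ge a+2$; $r_3$ hindered $\implies b<1$ and $a\ge 2$.

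The third step is just arithmetic. If $r_1$ and $r_3$ are both blocked then $d\ge b+2$ and $b\ge d+2$, so $0\ge 4$; and if $r_2$ and $r_4$ are both blocked then $a\ge c+2$ and $c\ge a+2$, again $0\ge 4$; this proves the first statement. For the second, if $r_1$ and $r_3$ are both hindered then $a\ge 2$ and $c\ge 2$, so $a+c\ge 4$, whereas the hypothesis $r_1r_3<r_2r_4$ together with $b,d<1$ gives $a+c<b+d<2$ — a contradiction.

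The main obstacle is the convexity bookkeeping suppressed in the second step: one has to confirm that the list of segments and extensions a pathway can meet is really that short. The only delicate exclusion is the extension of $r_2r_3$ beyond $r_2$ crossing the pathway of $r_1$: a short computation shows that this event is equivalent to $(a-1)(1-b)\ge(1+b)(c+1)$, while strict convexity of the quadrilateral at the vertex $r_2$ reads exactly $(a-1)(1-b)<(1+b)(c+1)$, so it cannot occur. (A minor additional point: when the extension of $r_3r_4$ does cross the pathway of $r_1$, strict convexity at the vertex $r_4$ automatically places the crossing within that pathway, so no extra upper bound on $a$ is needed.) Once these facts are in place, the rest is the arithmetic above.
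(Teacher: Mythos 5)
Your proposal is correct, but it proves the lemma by a genuinely different route than the paper. The paper's argument is short and synthetic: for blockedness it observes that if (say) $r_2$ is blocked, the segment $r_1r_3$ crosses $\ell_2$ beyond $p_2$, hence crosses the parallel line $\ell_4$ even farther beyond $p_4$, so it misses $r_4$'s pathway; for hindrance it shows that $r_1$ and $r_3$ both hindered forces $r_1,r_3$ external and $r_2,r_4$ internal, whence $r_1r_3>r_2r_4$, contradicting the hypothesis. You instead normalize coordinates ($Q=[-1,1]^2$, parameters $a,b,c,d\ge 0$ encoding the signed distances) and reduce both claims to explicit inequalities ($d\ge b+2$ versus $b\ge d+2$, and $a,c\ge 2$, $b,d<1$ versus $a+c<b+d$); your hindrance bounds are in fact a sharpened, quantitative version of the paper's external/internal dichotomy, and your use of strict convexity at $r_2$ to kill the extension of $r_2r_3$ beyond $r_2$ is the coordinate counterpart of what the paper's figure-based reasoning takes for granted. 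The price of your route is the case bookkeeping you yourself flag: your enumeration of lines that can meet the pathway of $r_1$ silently skips three of the nine segment/extension cases (the extension of $r_2r_4$ beyond $r_2$, of $r_3r_4$ beyond $r_3$, and of $r_2r_3$ beyond $r_3$), though each is dismissed in one line (they stay in $y\le 0$, in $y<-1$, or meet $y=1$ only at negative $x$); also, the blocked conditions for $r_2$ and $r_4$ do not follow from the point reflection alone but from the $90^\circ$ rotational symmetry of the normalized frame (or simply from relabeling, since the first claim is symmetric in the two opposite pairs), and the aside ``$b+d<2$ precisely when $b,d<1$'' should just be the implication $b,d<1\Rightarrow b+d<2$, which is all you use. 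What your approach buys is a fully checkable, figure-free verification with explicit constants; what the paper's buys is brevity and independence from any coordinate normalization.
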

\begin{proof}
Referring to Figure~\ref{f:2hinder}, if $r_2$ is blocked, it means that the segment $r_1r_3$ crosses $\ell_2$ above $p_2$. Hence $r_1r_3$ must cross $\ell_4$ further above, and therefore $r_4$ cannot be blocked, because its pathway lies below $p_4$.

Suppose now that $r_1$ and $r_3$ are both hindered, as in Figure~\ref{f:2hinder}. Because the line $r_1r_4$ intersects the pathway of $r_3$, it is immediate to see that $r_1$ must be external and $r_4$ must be internal. Symmetrically, $r_3$ must be external and $r_2$ must be internal. This means that $r_1r_3>r_2r_4$, contradicting our assumption.
\begin{figure}[!ht]
\centering
\includegraphics[scale=1]{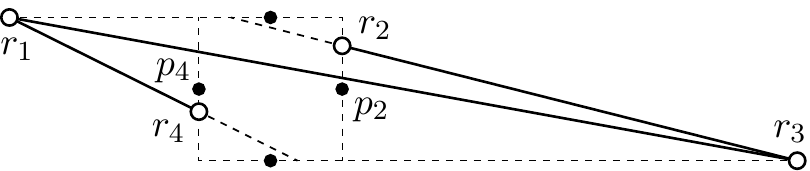}
\caption{$r_2$ is blocked; $r_1$ and $r_3$ are hindered}
\label{f:2hinder}
\end{figure}
\end{proof}

\begin{lemma}\label{l:pinwheel2}
In a Pinwheel configuration where $r_1$ and $r_3$ are opposite, both are finished, and neither of them is in the thin hexagon $H$ with main diagonal $r_2r_4$, there is a choice of critical points that lets $r_2$ and $r_4$ reach their targets in finite time as they perform a cautious move.
\end{lemma}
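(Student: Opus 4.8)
The plan is to exhibit an explicit set of critical points along the pathways of $r_2$ and $r_4$ and argue that, with this choice, both robots reach their targets after finitely many cautious moves without ever leaving the Pinwheel class (or, if they do leave it, only by finishing). First I would record the structural facts guaranteed by the hypotheses: since $r_1$ and $r_3$ are both finished and opposite, the whole configuration is ``almost'' a square, with only $r_2$ and $r_4$ off their targets; by Lemma~\ref{l:pinwheel1} at most one of $r_2,r_4$ is blocked, and by Corollary~\ref{c:external} their pathways do not intersect. The assumption that neither $r_1$ nor $r_3$ lies in the thin hexagon $H$ with main diagonal $r_2r_4$ is exactly what rules out the bad geometric configurations in which moving $r_2$ or $r_4$ a small amount could trigger a Thin Hexagon classification (Configuration~2, which has priority); I would make this precise by showing that a neighborhood of the current positions of $r_2$ and $r_4$ along their pathways, together with the fixed points $r_1,r_3$, still falls outside every thin hexagon on $r_2r_4$ and on every other pair, and also stays strictly convex with non-orthogonal diagonals, so the target square — hence the guidelines and targets $p_2,p_4$ — does not move (as already observed in Section~\ref{s:3}).

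The critical points I would use are the finitely many points on the pathway of $r_2$ (resp.\ $r_4$) at which some qualitative feature could change: intersections of the pathway with the boundary of a thin hexagon determined by the other three robots, with the lines through pairs of the other robots (to control blocking/hindering and the Orthogonal and Non-Convex classes), and with the locus where the diagonals become orthogonal or the quadrilateral degenerates. Because $r_1$ and $r_3$ are fixed and $r_2,r_4$ each move along a fixed segment, these loci are algebraic curves of bounded degree, so each pathway is cut into finitely many open arcs. A cautious move sends the robot to the first such critical point it meets; I would then verify the invariant that after each such hop the configuration is still a Pinwheel with $r_1,r_3$ finished and opposite, $r_1r_3<r_2r_4$ preserved, and neither $r_1$ nor $r_3$ in the thin hexagon on $r_2r_4$ — so that the same rule fires again — and that between consecutive critical points neither $r_2$ nor $r_4$ is blocked or hindered (using Lemma~\ref{l:pinwheel1} and the no-crossing property), so the algorithm keeps instructing both of them to move toward their targets. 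Since $\delta>0$, each activated robot makes progress of at least $\delta$ toward the next critical point, and there are only finitely many critical points, so in finitely many cycles both $r_2$ and $r_4$ arrive at $p_2$ and $p_4$; at that moment all four robots are finished and the configuration is a square, which the algorithm then maintains.

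The main obstacle I expect is the interplay with the asynchronous, non-rigid scheduler: while $r_2$ is mid-move toward a critical point, $r_4$ may be activated (or vice versa), and I must ensure that $r_4$'s new snapshot still classifies the configuration as a Pinwheel handled by this very sub-rule, so that $r_4$ is told to keep moving toward $p_4$ rather than doing something inconsistent. This is precisely what the choice of critical points must buy us: a critical point of $r_2$ must include every position where $r_4$'s view could change class, and symmetrically, so that no matter how the scheduler interleaves activations and stops, every snapshot taken between two consecutive ``global'' critical configurations yields the same instruction. I would formalize this by taking the critical points of each robot to be the union over the (finitely many) possible positions of the other moving robot on its own pathway of all the class-boundary loci — this keeps the set finite because everything is piecewise-algebraic of bounded complexity — and then a straightforward but tedious case check confirms that the Orthogonal, Thin Hexagon, and Non-Convex classes are never entered (the first because $r_1r_3<r_2r_4$ is maintained and the diagonals stay non-orthogonal, the second and third by the hypothesis that $r_1,r_3\notin H$ together with the critical-point barrier), so Configuration~4 keeps firing with the desired rule until both robots finish.
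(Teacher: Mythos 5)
There is a genuine gap, and it sits exactly where the lemma is hard: the simultaneous, asynchronous motion of \emph{two} robots. Your plan is to take as critical points all intersections of each pathway with the class-boundary loci, made robust by taking ``the union over the (finitely many) possible positions of the other moving robot.'' But the other robot occupies a \emph{continuum} of positions along its pathway, so this union is the sweep of a one-parameter family of algebraic curves and generically cuts the pathway in whole subintervals, not finitely many points; the claimed finiteness does not hold, and with it the whole ``hop from critical point to critical point'' termination argument collapses. The paper sidesteps this by never trying to barricade the moving class boundaries at all: it shows that an Orthogonal configuration simply cannot arise before both robots finish (because $r_1r_3$ is parallel to the guidelines of $r_2$ and $r_4$, together with concordance), and that a Thin Hexagon on $r_2r_4$ can never re-form, via a monotonicity argument you do not supply --- as $r_2$ and $r_4$ advance, the main diagonal $r_2r_4$ shrinks while the distances of the fixed $r_1$, $r_3$ from it grow, and initially their sum already exceeds the hexagon's height (since each short edge is a quarter of the main diagonal). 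You assert the Thin Hexagon exclusion as ``a straightforward but tedious case check,'' but this quantitative invariant is the heart of the lemma.

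A second concrete error: you invoke Lemma~\ref{l:pinwheel1} to conclude that between critical points neither $r_2$ nor $r_4$ is blocked or hindered. That lemma only forbids two \emph{opposite} robots from being simultaneously blocked, and forbids simultaneous hindering only for the pair on the \emph{shorter} diagonal ($r_1,r_3$ when $r_1r_3<r_2r_4$); here $r_2$ and $r_4$ lie on the longer diagonal and can in fact both be hindered (this is precisely the situation of Figure~\ref{f:2hinder} with roles exchanged). Hindering is what threatens a Non-Convex configuration, so it must be handled, and it must be handled robustly against the other robot moving concurrently. The paper does this with a specific device absent from your proposal: each robot's critical point is set \emph{halfway} to its hindering point, and one checks that these ``safe zones'' only lengthen as both robots advance, so the rule can be iterated until neither robot is hindered and both reach their targets. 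Without some such robust, monotone choice (rather than exact boundary crossings), your argument does not go through.
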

\begin{proof}
Our critical points will prevent $r_2$ and $r_4$ from forming an Orthogonal, Thin Hexagon, or Non-Convex configuration before reaching their targets.

An Orthogonal configuration cannot be formed before $r_2$ and $r_4$ reach their targets, because $r_1r_3$ is parallel to their guidelines.

A Thin Hexagon configuration cannot be formed, either. Note that $r_1$ and $r_3$ must be on opposite sides of $H$, due to Lemma~\ref{l:target2}. As Figure~\ref{f:nohex2} suggests, it is straightforward to verify that if $r_1$ and $r_3$ are both outside $H$, the sum of the distances of $r_1$ and $r_3$ from the main diagonal is greater than the height of $H$ (this is true because each short edge of a thin hexagon is a quarter of its main diagonal). As $r_2$ and $r_4$ move toward their targets, the main diagonal becomes shorter and the sum of the distances of $r_1$ and $r_3$ from the main diagonal grows. In particular, this sum is greater than the new height of $H$. It follows that, no matter how $r_2$ and $r_4$ move, a Thin Hexagon configuration will never be formed.

\begin{figure}[!ht]
\centering
\includegraphics[scale=1]{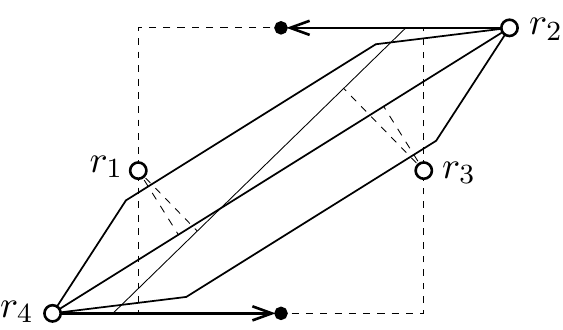}
\caption{As $r_2$ and $r_4$ move, no Thin Hexagon is formed}
\label{f:nohex2}
\end{figure}

Observe that both $r_2$ and $r_4$ could be hindered, say at $v$ and $v'$, respectively (as $r_1$ and $r_3$ in Figure~\ref{f:2hinder}). It suffices to set their critical points halfway to $v$ and $v'$ to guarantee that a Non-Convex configuration will never be formed. We call the segments $r_2v$ and $r_4v'$ \emph{safe zones}. It is easy ot see that, as $r_2$ and $r_4$ move (and recompute a new $v$ and $v'$ pair), their safe zones get longer, giving them even more leeway, until they are not hindered any more, and can safely reach their targets.
\end{proof}

\begin{lemma}\label{l:scissors}
In a Thin Hexagon configuration that is also a Scissors one, both extremes must be external.
\end{lemma}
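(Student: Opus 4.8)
My plan is to recast ``$r$ is external'' as an inequality in the four robot positions, then split into two cases according to how the two extremes of the thin hexagon sit inside the convex quadrilateral; the first case is an easy angle estimate that does not even need the Scissors hypothesis, and the second is where Scissors is used, via the sign pattern of the signed distances.

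\emph{Reformulation.} Since the configuration is also Scissors it is strictly convex with non-orthogonal diagonals, so the square $Q$, the target square $Q'$, the guidelines, etc.\ are well defined. Fix a robot and label so that it is $r_1$. Its guideline $\ell_1$ is parallel to $\ell_3$, while the two edges of $Q$ adjacent to the edge of $Q$ on $\ell_1$ lie on $\ell_2$ and $\ell_4$, which are perpendicular to $\ell_1$ and pass through the neighbours $r_2$ and $r_4$. Hence the two corners of $Q$ on $\ell_1$ are precisely the feet of the perpendiculars dropped from $r_2$ and from $r_4$ onto $\ell_1$, so $r_1$ is internal iff it lies strictly between them. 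Taking $u:=(r_1-r_3)+|r_2-r_4|\,\nu$ as a direction vector of $\ell_1$, where $\nu$ is the unit normal to $r_2r_4$ pointing from $r_1$ toward the line $r_2r_4$ (so that $\langle r_2-r_1,\nu\rangle=\langle r_4-r_1,\nu\rangle=\operatorname{dist}(r_1,\overline{r_2r_4})$), one computes at once, for $j\in\{2,4\}$,
$$\langle r_j-r_1,\,u\rangle=2\operatorname{area}(r_1r_2r_4)-\langle r_j-r_1,\;r_3-r_1\rangle .$$
Therefore $r_1$ is external iff $2\operatorname{area}(r_1r_2r_4)$ does \emph{not} lie strictly between $\langle r_2-r_1,r_3-r_1\rangle$ and $\langle r_4-r_1,r_3-r_1\rangle$; and the sign of the signed distance of $r_1$ (i.e.\ which way it must turn about the centre of $Q'$ to reach its target) is also determined by these same three quantities. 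Write $A,B$ for the extremes of the thin hexagon $H$, and $C,D$ for the two robots lying inside $H$.

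\emph{Case 1: $A$ and $B$ are opposite vertices} (so $C,D$ are on opposite sides of the line $AB$). Apply the reformulation with $r_1:=A$, $r_3:=B$, $\{r_2,r_4\}:=\{C,D\}$. As $C,D\in H$, the vectors $\overrightarrow{AC},\overrightarrow{AD}$ lie in the $50^\circ$ cone of $H$ at $A$, so $\angle BAC,\angle BAD\le 25^\circ$, hence $\angle CAD\le 50^\circ$; also $|AD|<|AB|$ since $AB$ is the diameter of $H$ and is realised only by the pair $\{A,B\}$. Then
$$2\operatorname{area}(ACD)=|AC|\,|AD|\sin\angle CAD<|AC|\,|AB|\sin 50^\circ<|AC|\,|AB|\cos 25^\circ\le\langle\overrightarrow{AC},\overrightarrow{AB}\rangle ,$$
using $\sin 50^\circ<\cos 25^\circ$ (and $\sin\angle CAD>0$ by strict convexity), and the same holds with $D$ in place of $C$. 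So $2\operatorname{area}(ACD)$ is below both inner products, a fortiori not strictly between them, and $A$ is external; by the symmetry $A\leftrightarrow B$, $C\leftrightarrow D$, so is $B$. No use of the Scissors hypothesis is made here.

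\emph{Case 2: $A$ and $B$ are adjacent vertices}, with cyclic order $A,E,F,B$ where $\{E,F\}=\{C,D\}$. In this configuration an extreme \emph{can} be internal for a generic Thin Hexagon configuration, so the Scissors hypothesis is now essential; this is the step I expect to be the main obstacle. Suppose, for contradiction, that $A$ is internal. The key claim is that then $B$, $E$, $F$ are pairwise concordant. To prove it I would compute, from the formula above, the sign of the signed distance of each of $A,B,E,F$ (taking each in turn as ``$r_1$''), exploiting that $E,F$ lie in the $25^\circ$ cones of $H$ at both $A$ and $B$, that $|AE|,|AF|,|BE|,|BF|<|AB|$, and Lemma~\ref{l:target2} (to fix the common orientation of the signs), and then check that ``$A$ internal'' is compatible only with $A$'s signed distance being opposite in sign to each of those of $B$, $E$, $F$. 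Granting the claim, every discordant pair of robots contains $A$, so the four robots cannot be partitioned into two discordant pairs; but a Scissors configuration is partitioned into two disjoint divergent — in particular discordant — pairs, a contradiction. Hence $A$ is external, and by the symmetry exchanging the two extremes so is $B$.
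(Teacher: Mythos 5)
Your reformulation of internality is correct: $u=(r_1-r_3)+|r_2-r_4|\,\nu$ is indeed a direction vector of $\ell_1$ (since $q=r_1+|r_2-r_4|\,\nu$), and $r_1$ is internal exactly when $2\operatorname{area}(r_1r_2r_4)$ lies strictly between the two inner products; your Case~1 (extremes at opposite vertices of the quadrilateral) is then complete and correct. The genuine gap is Case~2, which you yourself flag as the main obstacle: it is the only place where the Scissors hypothesis can enter, so it carries essentially all the content of the lemma, and it is not proved. The ``key claim'' that $A$ internal forces $B,E,F$ to be pairwise concordant is merely announced together with a plan (``I would compute \dots and then check \dots''); it is a non-routine quantitative statement about the signs of the signed distances of four nearly collinear points, and nothing in the proposal establishes it, or even makes it plausible beyond the analogy with Case~1. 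For comparison, the paper does the corresponding work explicitly and by a different route: it bounds the number of external robots via Corollary~\ref{c:external}, observes that an external robot must be an extreme, and then, assuming at most one external robot and exploiting divergence of a specific pair, derives the contradiction from the concrete estimate $\angle r_1r_3r_4,\ \angle r_1r_4r_3>\arctan(1/2)>25^\circ$, incompatible with both remaining robots lying in the $25^\circ$ cones at the extremes. Your Case~1 estimate is in the same spirit, but the analogous estimate for adjacent extremes --- the hard case --- is exactly what is missing.

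Even granting the key claim, the concluding deduction needs repair. Under the paper's definitions, concordant and discordant are not mutually exclusive: a pair containing a finished robot is both at once, and such a pair can still be divergent. Hence ``$B$, $E$, $F$ pairwise concordant'' does not by itself rule out a partition of the four robots into two divergent pairs; you would need the strict statement that the signed distances of $B$, $E$, $F$ are nonzero and of the same sign, i.e., the degenerate subcase in which one of them (or $A$ itself, whose signed distance would then be $0$, making ``opposite in sign'' vacuous) is finished must be handled separately. So as it stands the proposal settles only the easy half of the lemma.
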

\begin{proof}
By Corollary~\ref{c:external}, at most two robots can be external. Clearly, each external robot must necessarily be an extreme of the thin hexagon. Suppose for a contradiction that at most one robot is external. Let $r_1$ be either an external robot or an extreme of the thin hexagon (in case there are no external robots). The other extreme must be the farthest robot, hence either $r_3$ or $r_4$ (assuming that $r_1$ and $r_2$ are divergent). Now it is straightforward to verify that both angles $\angle r_1r_3r_4$ and $\angle r_1r_4r_3$ are greater than $\arctan(1/2)>25^\circ$, and hence the thin hexagon cannot contain both $r_3$ and $r_4$.
\end{proof}

\begin{lemma}\label{l:nohex}
If a configuration is not Thin Hexagon, $r_3$ and $r_4$ are convergent, and $r_1$ and $r_2$ do not move while $r_3$ and $r_4$ move toward their targets, the configuration never becomes Thin Hexagon.
\end{lemma}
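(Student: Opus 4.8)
The plan is to argue by contradiction: suppose that, at some moment while $r_3$ and $r_4$ move along their pathways toward their targets $p_3$ and $p_4$, the configuration becomes Thin Hexagon for the first time. Since $r_1$ and $r_2$ are fixed throughout, the main diagonal of any thin hexagon witnessed during this motion must have $r_1r_2$, $r_1r_3$, $r_1r_4$, $r_2r_3$, $r_2r_4$, or $r_3r_4$ as its endpoints — six cases. First I would dispose of the cases whose main diagonal is $r_1r_2$: here $r_3$ and $r_4$ would both lie inside the thin hexagon $H$ on $r_1r_2$, but since $r_3$ and $r_4$ are convergent, their guidelines $\ell_3$, $\ell_4$ meet at a point $g$ with $p_3 \in r_3g$ and $p_4 \in r_4g$; one checks that the angular constraints of a thin hexagon (every non-extreme angle equals the common value forced by $h_1h_2 = h_1h_4/4$ and $50^\circ$ at the extremes, so each beacon angle is $160^\circ$ and the hexagon is extremely ``thin'', width $= $ main diagonal$/2 \cdot \tan(25^\circ)$ roughly) are incompatible with the positions that convergent guidelines impose on $r_3$ and $r_4$ relative to $r_1$ and $r_2$ — essentially the same angle estimate as in Lemma~\ref{l:scissors}, $\arctan(1/2) > 25^\circ$, rules it out. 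The truly new work is the remaining cases.

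Next I would handle the cases where the main diagonal is one of $r_1r_3$, $r_1r_4$, $r_2r_3$, $r_2r_4$, i.e.\ one endpoint is a stationary robot and the other is a moving robot. Here the strategy is a monotonicity/continuity argument: at the supposed first instant the configuration is Thin Hexagon, the four robots form the hexagon $H = h_1\ldots h_6$ exactly; by hypothesis the configuration was \emph{not} Thin Hexagon an instant earlier. I would track the relevant signed quantity — for instance the distance of the two ``inner'' robots from the main diagonal, or the discrepancy between the short-edge length and one quarter of the main diagonal — and show that, because only $r_3$ and $r_4$ move and they move monotonically along fixed lines toward fixed targets, this quantity is monotone in the motion parameter. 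Combined with the fact that the target configuration (robots at $p_1,\dots,p_4$, forming the target square $Q'$) is manifestly not a thin hexagon (a square has all angles $90^\circ$, not $50^\circ$ and $160^\circ$), monotonicity forces the thin-hexagon condition to fail throughout, contradicting that it holds at the first instant. The case where the main diagonal is $r_3r_4$ (both endpoints moving) is similar but uses that $r_3$ and $r_4$ are convergent, so $r_3r_4$ shrinks, while the perpendicular distances of $r_1$ and $r_2$ from the line $r_3r_4$ are controlled — again reducing to an angle estimate showing the beacons cannot fall into position.

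The main obstacle, I expect, is the case analysis bookkeeping: ``Thin Hexagon'' as a \emph{class} is defined to exclude the earlier classes (Orthogonal, and within Configuration 2 the clause ``but not on two adjacent beacons''), so strictly I must show the configuration never satisfies the raw thin-hexagon geometric predicate, or else show it satisfies it only in the excluded sub-case — and I should be careful about which robot pair plays the role of the main diagonal, since the class definition fixes it as $r_1r_2$ but the hexagon that accidentally appears during the motion might be spanned by a different pair. The cleanest route is to prove the strong statement that \emph{no} four points in the swept family $\{r_1, r_2, r_3(t), r_4(t)\}$ ever form a thin hexagon at all, for any labelling of the main diagonal, which makes the argument robust to the permutation-of-indices convention and subsumes the excluded sub-cases. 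I would organize the proof as: (1) reduce to the six main-diagonal cases; (2) kill $r_1r_2$ via the convergence angle estimate; (3) kill the four mixed cases via monotonicity along the pathways plus the non-hexagonality of the target square; (4) kill $r_3r_4$ via the shrinking-diagonal estimate. Each individual estimate is a short trigonometric computation of the kind already appearing in Lemmas~\ref{l:pinwheel2} and~\ref{l:scissors}, so I would cite the thin-hexagon proportions ($h_1h_2 = h_1h_4/4$, $50^\circ$ at the extremes) and the bound $\arctan(1/2) > 25^\circ$ rather than grinding through them.
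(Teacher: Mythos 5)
Your plan diverges from the paper's proof, and the divergence exposes genuine gaps. The paper's entire argument is about the single hexagon whose main diagonal is the \emph{moving} pair $r_3r_4$: since $r_3$ and $r_4$ are convergent, the motion of that hexagon decomposes into shrinkages about an extreme and rotations about an extreme in the direction away from the stationary robots (which, by convexity/Lemma~\ref{l:target2}, lie on one side of the line $r_3r_4$); both operations keep a point that is outside the hexagon outside, and expel a point that initially sits on a beacon. This is exactly the case you defer to a ``shrinking-diagonal estimate'' with distances that are ``controlled'' and beacons that ``cannot fall into position'' --- but that is the whole content of the lemma, not a detail. Note also that the distance from $r_1$ to the \emph{line} $r_3r_4$ need not behave monotonically when $r_1$ projects beyond an extreme (rotation about $r_4$ moves the far half-line toward $r_1$), so the argument really must be about the bounded hexagon, as in the paper. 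Worse, your ``cleanest route'' --- proving that no four points of the swept family ever form a thin hexagon at all --- is false at $t=0$: the hypothesis ``not Thin Hexagon'' allows the raw containment to hold with $r_1,r_2$ exactly on two adjacent beacons (the clause excluded from the class definition, which is precisely the Thin-Hexagon-to-Scissors handoff in the main theorem), so the strong statement does not subsume the excluded sub-case; the real work is showing the motion immediately and permanently expels the beacon robots, which your sketch does not do.

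The other two ingredients are also not proofs as stated. For the mixed diagonals ($r_1r_3$, etc.) you invoke ``monotonicity of some signed quantity plus the target square is not a hexagon''; but the Thin Hexagon predicate is a conjunction of containment conditions, not the membership of a single scalar in an interval, and even for a monotone scalar, knowing the endpoints are non-hexagonal does not prevent the predicate from holding at intermediate times unless you show the endpoint values lie on the same side of the ``bad'' set --- none of which is identified or verified. For the $r_1r_2$ diagonal you transplant the $\arctan(1/2)>25^\circ$ bound from Lemma~\ref{l:scissors}, but that lemma's estimate is derived under different hypotheses (both pairs divergent, with the external/extreme structure of a Scissors configuration), whereas here the interior pair is convergent and moving; the estimate does not transfer without a new argument. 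To your credit, you correctly observe that the class is defined up to relabeling, so a fully rigorous proof should say why no \emph{other} pair can span the main diagonal --- a point the paper's own proof passes over silently --- but flagging the cases is not the same as closing them, and the one case you must close in any event ($r_3r_4$) is the one your proposal leaves essentially open.
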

\begin{proof}
By Lemma~\ref{l:target2}, $r_1$ and $r_2$ are on the same side of the line $r_3r_4$, as in Figure~\ref{f:nohex1}. Let $H$ be the thin hexagon with main diagonal $r_3r_4$. By assumption, either $r_1$ and $r_2$ occupy two adjacent beacons of $H$ or one of them, say $r_1$, is not in $H$. In both cases, as soon as $H$ starts moving together with $r_3$ and $r_4$, $r_1$ is guaranteed to remain strictly outside of $H$. Indeed, the position of $H$ at each time can be obtained from its initial position by a composition of two types of transformations: shrinkage about an extreme and rotation about an extreme in the direction opposite to $r_1$. Both these operations cause $r_1$ to stay out of $H$ if it is already out and to get out of $H$ if it is initially on a beacon.
\begin{figure}[!hb]
\centering
\includegraphics[scale=1]{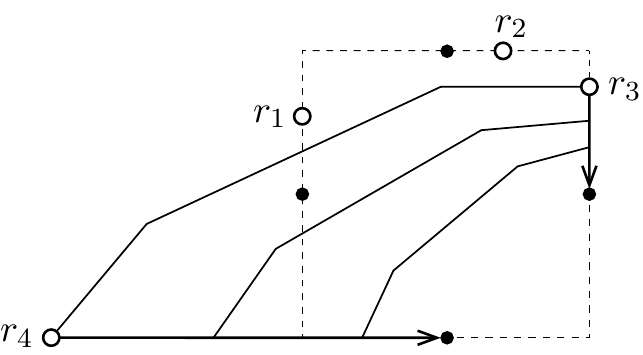}
\caption{As $r_3$ and $r_4$ move, no Thin Hexagon is formed}
\label{f:nohex1}
\end{figure}
\end{proof}

\begin{theorem}
The algorithm of Section~\ref{s:4} solves the Square Formation problem in the ASYNCH model.
\end{theorem}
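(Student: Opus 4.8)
The plan is to argue by a global induction on a progress measure, following the seven-way case analysis of Section~\ref{s:4}. For each configuration class I would establish two facts: first, that the prescribed cautious moves can be realized with a suitable finite set of critical points, so that no robot is ever caught in the middle of a pathway by a class change that invalidates its computed destination — here I lean on the invariance of the target square (Lemma~\ref{l:target1}) and on the fact, noted in Section~\ref{s:3}, that the target square is unchanged while the quadrilateral stays strictly convex with non-orthogonal diagonals; second, that completing such a move strictly advances a progress measure. The terminal configuration is the one in which all four robots are finished: since the four targets $p_1,p_2,p_3,p_4$ are exactly the vertices of the target square $Q'$, every robot then computes its own position as its destination and the configuration is stationary forever, which is what the problem requires.

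I would first dispose of the three ``degenerate-adjacent'' classes. For \textbf{Orthogonal}, pushing every robot radially outward to the common maximal distance from $c$ keeps the diagonals orthogonal, so the class persists, and after all four robots have completed their moves — which takes finitely many cycles, since each robot's destination in absolute coordinates is fixed until reached — the four points form a square directly. For \textbf{Thin Hexagon}, I would check each sub-case of the Execution rule separately, using the quantitative features of thin hexagons already exploited in Lemmas~\ref{l:pinwheel2} and~\ref{l:scissors} (short edge equal to a quarter of the main diagonal, $50^\circ$ apex angles) to see that $r_3$ and $r_4$ are driven to a configuration that is no longer a thin hexagon with main diagonal $r_1r_2$, without ever creating an Orthogonal or Non-Convex configuration en route; the ``closest moves first'' rule and the symmetric tie case keep the prescription well defined for anonymous robots. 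For \textbf{Non-Convex}, moving $r_4$ to the foot of an interior altitude of $r_1r_2r_3$ yields a strictly convex quadrilateral (and if its diagonals happen to become orthogonal we have landed in Orthogonal, already handled).

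On the strictly convex, non-orthogonal stratum I would first argue exhaustiveness: using Corollary~\ref{c:external} together with the concordant/discordant and convergent/divergent dichotomies of Section~\ref{s:3}, every such configuration that is not Thin Hexagon matches exactly the relevant one of Pinwheel, Scissors, Flowing, or One Discordant, with the ``first matching class'' convention resolving overlaps. For \textbf{Pinwheel} I would use Lemma~\ref{l:pinwheel1} to see that the blocked/hindered/free sub-cases are exhaustive and that two opposite robots are never simultaneously stuck, drive the shorter-diagonal pair $r_1,r_3$ to finished, and then move $r_2,r_4$ (or just $r_2$, if a thin hexagon with main diagonal $r_2r_4$ intervenes) to their targets using Lemma~\ref{l:pinwheel2} and the Thin Hexagon case. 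For \textbf{Scissors} and \textbf{Flowing} I would combine Corollary~\ref{c:external} with Lemma~\ref{l:nohex} to show that the designated internal or opposite robots cautiously reach their targets without ever forming a Thin Hexagon, finishing a robot or reducing to an already-handled case; Lemma~\ref{l:scissors} rules out the awkward overlap of Scissors with Thin Hexagon except when both extremes are external, which is absorbed into the Thin Hexagon rule. \textbf{One Discordant} reduces to moving the single discordant robot to its target, after which all four robots are concordant (a Pinwheel) or a square has formed. The progress measure that ties this together is, roughly, the lexicographic pair (number of finished robots, total remaining pathway length of the non-finished robots), refined so that a temporary excursion into Thin Hexagon, Non-Convex, or Orthogonal — each escaped in finitely many cycles by the base-case analysis, and never re-entered along the subsequent flow by Lemmas~\ref{l:pinwheel2} and~\ref{l:nohex} — counts as progress; since there are only four robots, the first coordinate increases only finitely often, and between increases each completed $\delta$-move of a designated robot strictly shrinks the second coordinate, forcing all four robots to be finished in finite time.

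The main obstacle is the interaction of the asynchronous scheduler with this case analysis: a robot may Look, compute a destination from the construction of Section~\ref{s:3}, and only start moving much later, when the snapshot is obsolete, and while it moves the configuration class can change underneath it. The entire argument hinges on showing, uniformly across all seven classes, that the cautious-move mechanism — the choice and continual recomputation of critical points — can be arranged so that a moving robot is always stopped before the configuration leaves its class in a way that would invalidate the computed destination, and, crucially, that critical points can always be placed strictly ahead of the robot's current position so that no robot is ever deadlocked in place. Verifying this in every sub-case, together with the accompanying geometric estimates that guarantee progress and non-recurrence of the bad classes, is the delicate heart of the proof.
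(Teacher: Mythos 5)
Your plan follows the paper's own skeleton---the seven-class case analysis, the same lemmas cited for the same purposes, cautious moves, and the special handling of the Thin Hexagon to Scissors transition---but there is a genuine gap in how you close the induction. Your lexicographic measure (number of finished robots, total remaining pathway length) is computed relative to the \emph{current} target square, and that square is recomputed whenever the configuration passes through classes 1--3. This is not a hypothetical: in a Pinwheel with $r_1r_3<r_2r_4$, the algorithm deliberately lets $r_1$ and $r_3$ stop on the boundary of the candidate hexagon $H$, so the configuration can genuinely enter the Thin Hexagon class; the hexagon rule then moves the two non-extreme robots to beacons or havens, the quadrilateral changes, and with it the guidelines, the targets, the set of finished robots, and all pathway lengths. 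Hence neither coordinate of your measure is monotone across such an episode, and the ``refinement'' by which excursions count as progress is precisely the point that needs proof: you must bound the number of excursions. The paper does this not with a measure but by showing that, during the convex-phase moves, the critical points prevent Orthogonal, Non-Convex, and (outside one controlled case) Thin Hexagon configurations from ever being formed, so the class index only decreases, with the single exception Thin Hexagon $\to$ Scissors; and after that exception the two beacon robots and then the two convergent external robots proceed straight to their targets (Lemma~\ref{l:nohex}), so at most one anomalous episode occurs. Without this non-recurrence argument your ``first coordinate increases only finitely often'' claim is unsupported.

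The second gap is that you explicitly defer what you yourself call the delicate heart: exhibiting, in every sub-case, critical points that stop a mover before any class change. The paper actually supplies these case by case---the boundary of the unique candidate hexagon for the movers in Pinwheel and for the internal movers in Scissors (Lemma~\ref{l:scissors} guaranteeing uniqueness), the halfway ``safe zones'' of Lemma~\ref{l:pinwheel2}, the closest-robot rule in Flowing that prevents an Orthogonal configuration when two opposite robots are finished, the observation that a mid-move transition from Scissors to Pinwheel causes no change of behavior, and the remark that single-mover cases need no discussion---so a proposal that leaves all of this to be verified is a plan rather than a proof. Two smaller inaccuracies: the Non-Convex rule does not ``yield a strictly convex quadrilateral''; the foot of the altitude is chosen precisely so that $r_4$ lands on an edge with $r_3r_4$ orthogonal to it, i.e.\ in an Orthogonal configuration, which then spreads into a square. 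And once the square is formed, it is maintained because the configuration is Orthogonal and each robot's prescribed destination coincides with its position, not because finished robots recompute targets of a (now undefined) target square.
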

\begin{proof}
We prove that the rules given in Section~\ref{s:4} are well defined, exhaustive, and make the four robots form a square in a finite amount of time. The general idea is that each configuration can only remain in the same class, or transition to a class with lower index, with one exception: a Thin Hexagon can become a Scissors configuration, and this case will be examined separately. We also have to verify that no robot is moving when the team's behavior changes, in order to prevent inconsistencies arising from robots believing to be in different classes, due to asynchronicity. The cautious move technique serves this purpose, but we have to show that suitable critical points exist. When only one robot is tasked with moving, it is sufficient to identify the first location on its path that may cause other robots to start moving. So, no discussion will be needed in this case.

Verifying the above for Configurations~1--3 is trivial, so let us assume that the robots' initial configuration is none of those, and in particular that a target square is well defined (cf.~Lemma~\ref{l:target1}). Now, as the robots move toward their targets, the target square remains unaltered, and collisions are impossible due to Corollary~\ref{c:external}.

Let the configuration be in the Pinwheel class. Lemma~\ref{l:pinwheel1} implies that the rules are unambiguous and a robot always moves, eventually forming a lower-index configuration or reaching a target. If $r_1r_3<r_2r_4$, the main diagonal of a possible thin hexagon $H$ must be $r_2r_4$, so it is easy for $r_1$ and $r_3$ to stop as soon as they reach the boundary of $H$, since $H$ remains still as they move. When $r_1$ and $r_3$ are finished, the cautious move of $r_2$ and $r_4$ succeeds due to Lemma~\ref{l:pinwheel2}.

Let the configuration be in the Scissors class. By Lemma~\ref{l:scissors}, if there are fewer than two external robots, no Thin Hexagon can ever be formed, and if there are two external robots (not more, by Corollary~\ref{c:external}), there is only one candidate thin hexagon $H$ having these two robots as extremes. If the internal robots move, they can set their critical points on the boundary of $H$. If the external robots move, they must be convergent (or they would be concordant by Corollary~\ref{c:external}, and the configuration would be Pinwheel), hence no Thin Hexagon can be formed, due to Lemma~\ref{l:nohex}. Also, no moving robot can be blocked or hindered at any point. When two opposite robots are finished, the configuration may transition to Pinwheel while other robots are still moving, but this is fine because it does not cause a change in behavior.

Let the configuration be in the Flowing class. If $r_3$ and $r_4$ move together, no critical points originating from thin hexagons have to be set, due to Lemma~\ref{l:nohex}, and the other critical points are easy to spot. If two opposite robots are finished, moving only the non-finished robot closest to its target prevents the formation of an Orthogonal configuration. A transition to Pinwheel or Scissors may occur, but only when no robots are moving.

If the configuration class is none of the above, there must be a unique discordant robot, which turns the configuration into a Pinwheel one upon reaching its target.

As we mentioned, the only anomaly in this process is the transition from a Thin Hexagon to a Scissors configuration, which occurs when two robots reach adjacent beacons. According to the rules for the Scissors case, the two robots on the beacons (which are internal) move to their targets. Since they move away from the main diagonal, they can form no Thin Hexagon. Afterwards, the two external robots (which are convergent) move to their targets without forming a Thin Hexagon, by Lemma~\ref{l:nohex}. No Orthogonal or Non-Convex configurations can be formed during these procedures, either.
\end{proof}

\small
\bibliographystyle{abbrv}

\end{document}